\newtheorem{lemma}{Lemma}
\newtheorem{theorem}[lemma]{Theorem}
\newtheorem{proposition}[lemma]{Proposition}
\theoremstyle{definition}
\newcommand\xqed[1]{%
	\leavevmode\unskip\penalty9999 \hbox{}\nobreak\hfill\quad\hbox{#1}%
}
\newcommand\remarkend{\xqed{$\triangle$}}
    \def\@endtheorem{\remarkend\endtrivlist\@endpefalse }
\theoremstyle{remark}
\newtheorem{remark}[lemma]{Remark} %Returning to the default behaviour
    \def\@endtheorem{\endtrivlist\@endpefalse }
\renewcommand\phi{\varphi}
\renewcommand{\leq}{\leqslant}
\newcommand{\R}{\mathbb{R}}
\newcommand{\C}{\mathbb{C}}
\newcommand\1{{\ensuremath {\mathds 1} }}
\newcommand{\cM}{\mathcal{M}}
\newcommand{\tr}{{\rm Tr}\,}
\newcommand{\supp}{{\rm Supp}}
\newcommand{\bx}{\mathbf{x}}
\newcommand{\by}{\mathbf{y}}
\newcommand{\bz}{\mathbf{z}}
\newcommand{\dd}{\mathrm{d}}
\def\d{\,{\rm d}}
\title[The low density Bose gas with three-body interactions]{Ground state energy of the low density Bose gas with three-body interactions}
\author[P.T. Nam]{Phan Th{\`a}nh Nam}
\address{Department of Mathematics, LMU Munich, Theresienstrasse 39, 80333 Munich, and Munich Center for Quantum Science and Technology, Schellingstr. 4, 80799 Munich, Germany} 
\email{nam@math.lmu.de}
\author[J. Ricaud]{Julien Ricaud}
\address{Centre de Math\'{e}matiques Appliqu{\'e}es, {\'E}cole polytechnique, 91128 Pa\-lai\-seau Cedex, France}
\email{julien.ricaud@polytechnique.edu}
\author[A. Triay]{Arnaud Triay}
\address{Department of Mathematics, LMU Munich, Theresienstrasse 39, 80333 Munich, and Munich Center for Quantum Science and Technology, Schellingstr. 4, 80799 Munich, Germany}  
\email{triay@math.lmu.de}
\begin{document}

\begin{abstract}
	We consider the low density Bose gas in the thermodynamic limit with a three-body interaction potential. We prove that the leading order of the ground state energy of the system is determined completely in terms of the scattering energy of the interaction potential. The corresponding result for two-body interactions was proved in seminal papers of Dyson (1957) and of Lieb--Yngvason (1998).
\end{abstract}

\maketitle

\begin{center}
\emph{Dedicated to the memory of Freeman J.~Dyson (1923-2020)}
\end{center}

\bigskip

%%%%%%%%%%%%%%%%%%%%%%%%%%%%%%%%%%%%%%%%%%%%%%%%%%%%%%%%%
\section{Introduction} 
%%%%%%%%%%%%%%%%%%%%%%%%%%%%%%%%%%%%%%%%%%%%%%%%%%%%%%%%% 

The Bose--Einstein condensation (BEC) is the phenomenon where many bosonic particles occupy a common one-body quantum state. It was predicted in 1924~\cite{Bose-24,Einstein-24,Einstein-25} and experimentally observed in 1995~\cite{Wieman-Cornell-95, Ketterle-95}, but the rigorous derivation of the BEC from first principles remains a major question in quantum physics. In fact, the pioneer works of Bose~\cite{Bose-24} and Einstein~\cite{Einstein-24,Einstein-25} are rigorous, but they concern only the non-interacting gas. The interactions between particles are essential to explain several phenomena such as superfluidity~\cite{Landau-41} and quantized vortices~\cite{Onsager-49}, but they complicate the analysis dramatically.

In general, a genuine many-body interaction potential of the form $U(x_1,\dots,x_N)$, where $N$ is the number of particles, is too difficult for practical computations. In dilute Bose gases, which are most relevant to the experiments in~\cite{Wieman-Cornell-95, Ketterle-95}, the range of the interaction is much smaller than the average distance between particles, and hence the interaction is often described by an effective potential depending only on few variables. 
Due to its simplicity, the two-body interaction is most assumed in the literature. In this context, the mathematical theory of interacting Bose gases goes back to Bogoliubov's 1947 paper~\cite{Bogoliubov-47b} where excited particles (the particles outside of the condensate) are treated as if they were quasi-free, leading to a prediction of the ground state energy and the excitation spectrum. In particular, Bogoliubov's theory gives a qualitative explanation of Landau's criterion for superfluidity~\cite{Landau-41}. However, as already noticed in~\cite{Bogoliubov-47b}, when applied to dilute Bose gases, Bogoliubov's approximation does not capture correctly the two-body scattering process of particles. Mathematically, this means that the usual mean-field approximation admits a subtle correction due to the correlation between particles.

While the emergence of the scattering length can be heuristically derived using perturbation methods~\cite{HY-57,LHY-57}, the rigorous understanding from first principles is highly nontrivial. In a seminal paper in 1957~\cite{Dyson-57}, Dyson proved rigorously that the ground state energy per volume in the thermodynamic limit satisfies 
\begin{equation} \label{eq:Dyson}
	C_{\rm Dys} a \rho^2 (1 + o(1)_{\rho a^3 \to 0}) \le e_{\rm 2B}(\rho) \le 4\pi a \rho^2 (1 + o(1)_{\rho a^3 \to 0})\,.
\end{equation} 
Here $\rho$ is the density of the system and $a$ is the scattering length of the two-body interaction; the condition $\rho a^3\to 0$ places us in the dilute regime. In~\cite{Dyson-57}, Dyson focused on a hard-sphere gas, but his argument can be translated to include general, positive potentials of finite range. Thus the significance of~\eqref{eq:Dyson} is the {\em universality}, namely the leading order of the complicated many-body energy can be determined in terms of only the scattering length of the interaction (any other details of the interaction potential is irrelevant). It turns out that the upper bound in~\eqref{eq:Dyson} is sharp, while the lower bound is about 14 times smaller than the correct one. It took some 40 years until Lieb--Yngvason~\cite{LieYng-98} proved the matching lower bound, thus concluding 
\begin{equation} \label{eq:LY}
	e_{\rm 2B}(\rho) = 4\pi a \rho^2 (1 + o(1)_{\rho a^3 \to 0})\,.
\end{equation}
The proof in~\cite{LieYng-98} also uses Dyson's important idea from~\cite{Dyson-57} of substituting a soft potential for the original one by sacrificing the kinetic energy. This argument, often referred to as {\em Dyson's lemma}, plays an important role in various dilute models, e.g. the Gross--Pitaevskii limit studied in~\cite{LieSeiYng-00,LieSei-02,LieSei-06,NamRouSei-16}. See~\cite{YauYin-09,BCS-21,FouSol-20,FouSol-21} for rigorous results on the next order correction to~\eqref{eq:LY}.

Although the two-body interaction is enough for many applications, in some cases the three-body correction is not negligible~\cite{BMZ-07,Petrov-14}. They contribute significantly, for example, to the computation of the binding energy of water~\cite{Mas-03}. In ultracold quantum gases, they can also be artificially enhanced by external fields, in a similar way to how the two-body scattering length is tuned by Feshbach resonance, and are expected to give rise to exotic physics like Pfaffian states~\cite{Petrov-14}. On the mathematical side, it is unclear how Bogoliubov's approximation should be modified in the dilute regime. In particular, to our knowledge, there is still a notable absence of heuristic discussion on the emergence of the three-body scattering process, let alone the rigorous understanding from first principles. The time-dependent problem, with the mean-field type potential $N^{6\beta-2} V\!\left( N^{\beta}(x-y,x-z) \right)$ for $\beta \ge 0$ small, has already been studied~\cite{ChePav-11,Chen-12,Yuan-15,CheHol-19,Lee-20,NamSal-20,LiYao-21}, and in a recent work~\cite{NamRicTri-21} we derived the leading order of the ground state energy in the Gross--Pitaevskii limit $\beta = 1/2$. The main purpose of the present paper is to extend the analysis to the thermodynamic limit, thus proving an analogue of~\eqref{eq:LY} for the low density Bose gases with three-body interactions. 

\bigskip
\noindent 
{\bf Acknowledgments.} We received funding from the Deutsche Forschungsgemeinschaft (DFG, German Research Foundation) under Germany's Excellence Strategy (EXC-2111-390814868). J.R. also acknowledges financial support from the French Agence Nationale de la Recherche (ANR) under Grant~No.~ANR-19-CE46-0007 (project ICCI).

%%%%%%%%%%%%%%%%%%%%%%%%%%%%%%%%%%%%%%%%%%%%%%%%%%%%%%%%%
\section{Main result}
%%%%%%%%%%%%%%%%%%%%%%%%%%%%%%%%%%%%%%%%%%%%%%%%%%%%%%%%%

%%%%%%%%%%%%%%%%%%%%%%%%%%%%%%
\subsection{Model}
%%%%%%%%%%%%%%%%%%%%%%%%%%%%%%
We consider $N$ bosons in $\Omega = [-L/2,L/2]^{3}$ for some $L>0$, interacting via a non-negative potential $V: \mathbb{R}^{3}\times \mathbb{R}^{3} \to [0,\infty)$. The system is described via the Hamiltonian
\begin{equation}\label{eq:H_NL}
	H_{N,L} = \sum_{i=1}^N -\Delta_{x_i} + \sum_{1\le i < j < k \le N} V(x_i - x_j,x_i-x_k)
\end{equation}
acting on the bosonic space $L^2_s(\Omega^N)$ where $-\Delta$ denotes the Laplacian with Neumann boundary conditions on $\Omega$. Since $H_{N,L}$ has to let $L^2_s(\Omega^N)$ invariant, this imposes the following \emph{three-body symmetry} on the interaction potential
\begin{equation}\label{eq:sym}
	V(x,y) = V(y,x) \quad \textrm{ and } \quad V(x-y,x-z) = V(y-x,y-z) = V(z-y,z-x)\,.
\end{equation}
The thermodynamic ground state energy per volume is defined as
\begin{equation} \label{eq:erho}
	e_{\rm 3B}(\rho) := \lim_{\substack{N \to \infty \\ N / L^3 \to \rho}} \inf_{\|\Psi\|_{L^2}^2 = 1} \frac{\langle \Psi, H_{N,L} \Psi \rangle}{L^3}\,.
\end{equation}
That the limit exists and does not depend on the boundary conditions is well-known, see for instance~\cite{Ruelle}. We will estimate $e_{\rm 3B}(\rho)$ in terms of the scattering energy of the interaction potential $V$. 

%%%%%%%%%%%%%%%%%%%%%%%%%%%%%%
\subsection{Scattering energy}
	\label{sec:scat_energy}
%%%%%%%%%%%%%%%%%%%%%%%%%%%%%%
Let $d\ge 3$ and $0\le v \in L^\infty(\R^d)$ be compactly supported. We define the {\em zero-scattering energy} of $v$ by 
\begin{equation} \label{eq:def-scat-energy}
	b(v):= \inf_{\varphi \in \dot H^1(\R^d)} \int_{\R^d} \left( 2|\nabla \varphi( \bx)|^2 + v(\bx) |1-\varphi(\bx)|^2 \right) \d \bx\,.
\end{equation}
Here $\dot H^1(\R^d)$ is the space of functions $g:\R^d \to \C$ vanishing at infinity with $|\nabla g|\in L^2(\R^d)$. Equivalently, we can also write $b(v)=\lim_{R\to \infty} b_R(v)$ where $b_R(v)$ is the energy in the ball $B(0,R)$ with the boundary condition $\varphi\equiv0$ on $\{|x|=R\}$ (see~\cite[Appendix C]{LSSY} for the latter definition). Contrarily to~\cite{LSSY} however, we do not need to assume that $v$ is radially symmetric. As proved in~\cite{NamRicTri-21}, the variational problem~\eqref{eq:def-scat-energy} has an optimizer $\omega=1-f$ where $f$ solves the scattering equation
\[
	-2\Delta f (\bx) + v(\bx) f(\bx) = 0\,, \quad \forall \bx\in \R^d \quad \textrm{ and } \quad \lim_{|\bx|\to \infty} f(\bx)=1\,.
\]
Integrating the above equation against $\omega(\bx) = 1 - f(\bx)$, one obtains an alternative expression for the modified scattering energy 
\[
	b(v) = \int_{\mathbb{R}^{d}} v(\bx) f(\bx)\, \d \bx\,.
\]

Note that if we formally insert the hard-sphere potential 
\[
	v_{\rm hs}(\bx) =
	\begin{cases}
		\infty, &\quad |\bx|<a\\
		0, &\quad |\bx|>a 
	\end{cases}
\]
in~\eqref{eq:def-scat-energy}, then we find that $b(v_{\rm hs})=c_d a^{d-2}$. For example $b(v_{\rm hs})=8\pi a$ when $d=3$. Thus $b(v)^{1/(d-2)}$ plays the role of the scattering length, up to a universal factor. In particular, for three-dimensional particles with three-body interactions ($d=6$), the~limit 
\[
	\rho b(v)^{3/4} \to 0
\]
corresponds to the dilute regime, where the length of the interaction ($\sim b(v)^{1/4}$) is much smaller than the mean distance between particles ($\sim \rho^{-1/3}$). 

%%%%%%%%%%%%%%%%%%%%%%%%%%%%%%
\subsection{Modified scattering energy.}
%%%%%%%%%%%%%%%%%%%%%%%%%%%%%%
We now focus on the Hamiltonian in~\eqref{eq:H_NL}. It turns out that the leading order of $e_{\rm 3B}(\rho)$ will be given in terms of a modified scattering energy of $V$, instead of the usual one as in~\eqref{eq:def-scat-energy}. Introducing $\cM: \R^3\times \R^3 \to \R^3\times \R^3$ given by
\begin{equation}\label{eq:intro-M}
	\cM
	= \frac{1}{2\sqrt{2}}
	\begin{pmatrix}
		\sqrt{3}+1 & \sqrt{3}-1 \\
		\sqrt{3}-1 & \sqrt{3}+1 
	\end{pmatrix},
\end{equation}
we define the modified scattering length of $V$ as
\[
	b_{\cM}(V) : = b(V(\cM \cdot)) \det \cM \,.
\]
Equivalently, we can express it similarly to~\eqref{eq:def-scat-energy} as 
\begin{equation} \label{eq:def_b}
	b_{\cM}(V) = \inf_{\varphi \in \dot H^1(\R^6)} \int_{\R^d} \left( 2| \cM \nabla \varphi(\bx)|^2 + V(\bx) |1-\varphi(\bx)|^2 \right) \d \bx\,.
\end{equation}
The matrix $\cM$ naturally appears when the problem~\eqref{eq:def_b} is posed in terms of reduced coordinates. Indeed, the change of coordinates
\[
	r_{1} = \frac{1}{3}(x_1+x_2+x_3)\,, \quad r_2 = x_1 -x_2\,, \quad \textrm{ and } \quad r_3 =x_1-x_3\,,
\]
leads to
\begin{multline}\label{eq:remove-center}
	-\Delta_{x_1} - \Delta_{x_2} - \Delta_{x_3} + V(x_1-x_2,x_1-x_3)\\
	\begin{aligned}[b]
		&= \left( \frac{1}{3} p_{r_1} + p_{r_2} + p_{r_3} \right)^2 + \left(\frac{1}{3} p_{r_1} - p_{r_2} \right)^2 + \left( \frac{1}{3} p_{r_1}-p_{r_3} \right)^2 + V(r_1,r_2) \\
		&= \frac{1}{3}p_{r_1}^2 + 2(p_{r_2}^2 + p_{r_3}^2 + p_{r_2} p_{r_3} ) + V(r_2,r_3)\,,
	\end{aligned}
\end{multline}
where we have denoted $p_x= -{\bf i}\nabla_x$. The last two terms, which are independent of the first one, lead to the minimization problem~\eqref{eq:def_b}. The matrix $\cM$ has been chosen so that 
\begin{equation}\label{eq:M2}
	\cM^2= \frac{1}{2}
	\begin{pmatrix}
		2 & 1 \\
		1 & 2
	\end{pmatrix}.
\end{equation}
As proved in~\cite{NamRicTri-21}, the variational problem~\eqref{eq:def_b} has an optimizer $\omega=1-f_{\cM}$ where $f_{\cM}$ satisfies the three-body symmetry~\eqref{eq:sym} and solves the scattering equation
\[
	-2\Delta_{\cM} f_{\cM} (\bx) + V(\bx) f_{\cM}(\bx) = 0\,, \quad \forall \bx\in \R^6 \quad \textrm{ and } \quad \lim_{|\bx|\to \infty} f(\bx)=1\,,
\]
with $-\Delta_{\cM}= |\cM \cdot \nabla|^2$. Then the modified scattering energy can be written as 
\[
	b_{\cM}(V) = \int_{\mathbb{R}^{6}} V(\bx) f_{\cM}(\bx)\, \d \bx \,.
\]
Since $b_{\cM}(V)$ and $b(V)$ have the same order of magnitude, $b_{\cM} (V)^{1/4}$ is proportional to the length of the interaction and the diluteness is still encoded in the limit $\rho b_{\cM} (V)^{3/4} \to 0$ (see also Section~\ref{sec:scat_energy}).

%%%%%%%%%%%%%%%%%%%%%%%%%%%%%%
\subsection{Main result}
%%%%%%%%%%%%%%%%%%%%%%%%%%%%%%
We can now state our main theorem.
\begin{theorem}\label{theo1}
	Let $V:\R^3 \times \R^3 \to [0,\infty)$ be bounded, compactly supported namely $\supp \, V \subset \{|\bx| \le R_0\}$ for some $R_0 >0$, and satisfy the three-body symmetry~\eqref{eq:sym}. 	Then in the dilute limit $Y:=\rho b_{\cM} (V)^{3/4} \to 0$, the thermodynamic ground state energy per volume in~\eqref{eq:erho} satisfies 
	\[
		e_{\rm 3B}(\rho) = \frac{1}{6}b_{\mathcal M}(V) \rho^3 (1 + \mathcal O(Y^{\nu}))
	\]
	for some constant $\nu>0$. 
\end{theorem}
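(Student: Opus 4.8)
The theorem is a matching two-sided bound, and I would follow the Dyson--Lieb--Yngvason strategy adapted to genuine three-body interactions. The upper bound is variational and relatively soft. I would take a Jastrow-type trial function built from the three-body scattering solution $f_{\cM}$ of \eqref{eq:def_b}, truncated at an intermediate length $R$ in the $\cM$-metric of $\R^6$ (with $b_{\cM}(V)^{1/4}\ll R\ll\rho^{-1/3}$, which leaves room since $Y\to0$), so that the correlation factor equals $1$ outside $\{|\bx|\le R\}$ and its quadratic form energy is $b_{\cM}(V)(1+o(1))$. Setting $\Psi=\prod_{i<j<k}f_{\cM,R}(x_i-x_j,x_i-x_k)$ and using the reduction \eqref{eq:remove-center}, the diagonal contribution of a single triple gives, per unit volume,
\[
	\binom{N}{3}L^{-3}\int_{\R^6}\Bigl(2|\cM\nabla f_{\cM,R}|^2+V|f_{\cM,R}|^2\Bigr),
\]
whose limit in \eqref{eq:erho} is $\tfrac16 b_{\cM}(V)\rho^3$. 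The remaining work is to show that the normalization $\|\Psi\|^2$ and the cross terms --- contributions of two triples sharing one or two particles --- are of lower order. Since each factor differs from $1$ only on a set of $\cM$-volume $\sim R^6$ and $\rho R^3\ll1$, these are suppressed by positive powers of the small parameters $\rho R^3$ and $b_{\cM}(V)R^{-4}$; optimizing $R$ against $Y$ fixes the upper-bound error at some $Y^{\nu}$.

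For the lower bound I would proceed in three stages. First, a three-body Dyson lemma: in the relative $\R^6$ geometry attached to $\cM$, the kinetic form $2|\cM\nabla|^2$ of a triple together with the three-body potential $V$ can be bounded below, at the expense of a small fraction $\eps$ of the kinetic energy, by a soft, longer-range three-body potential $W_R$ with $\int_{\R^6}W_R=b_{\cM}(V)(1-o(1))$ supported on scale $R$. In the $N$-body Hamiltonian \eqref{eq:H_NL} this must be applied together with a nearest-neighbour (here, nearest-pair) localization, so that each particle's relative kinetic energy is spent on only one surviving triple and overlapping triples are not double counted. Second, imposing Neumann boundary conditions and splitting $\Omega$ into cells of side $\ell$ with $R\ll\ell\ll\rho^{-1/3}$, superadditivity reduces the estimate to a single cell with $n_\alpha$ particles. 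Third, inside each cell the potential $W_R$ is weak, so a Temple / second-order bound controlled by the Neumann gap $\sim\ell^{-2}$ yields a cell energy at least $\binom{n_\alpha}{3}\ell^{-6}b_{\cM}(V)(1-o(1))$ times the retained kinetic fraction; summing over cells and applying the Lieb--Yngvason convexity argument to $n\mapsto\binom{n}{3}$ (Jensen with mean occupation $N/\#\text{cells}$) reproduces $\tfrac16 b_{\cM}(V)\rho^3$. Tracking the losses from $\eps$, from the scales $R,\ell$, and from the Temple correction, and optimizing them against $Y$, gives the error $\mathcal O(Y^{\nu})$.

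The hard part is the three-body Dyson lemma without radial symmetry. In the two-body case one exploits the radial monotonicity of the scattering solution on spheres, but here $V$ is not assumed radial and the only available ``radial'' structure is that of the anisotropic $\cM$-metric on $\R^6$; I therefore expect to need a geometric, non-radial version of Dyson's lemma for $-\Delta_{\cM}$, of the type used in Gross--Pitaevskii problems. Coupled to this is the combinatorial problem of sharing the finite kinetic energy among the $\binom{N}{3}$ overlapping triples: making the nearest-pair localization precise and verifying that the sacrificed fraction $\eps$ stays $o(1)$ throughout the dilute regime is where the quantitative exponent $\nu$ is ultimately pinned down.
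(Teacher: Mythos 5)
Your lower bound follows the paper's proof almost verbatim in structure: a Dyson lemma for non-radial potentials in the $\cM$-geometry of $\R^6$ (Lemma~\ref{lem:dyson_lemma}), a many-body version with a ``no four-body collision'' cut-off so that each particle's kinetic energy is spent on at most one triple (Lemma~\ref{lem:gen_dyson_lemma}), Neumann cells, Temple's inequality, and the Lieb--Yngvason convexity/superadditivity argument. However, one of your scale constraints is backwards and would break the argument: you require $R\ll\ell\ll\rho^{-1/3}$, i.e.\ mean occupation $\rho\ell^3\ll 1$ per cell. With fewer than one particle per cell on average, the Jensen step cannot reproduce $\rho^3$: the function $g(t)=t(t-1)(t-2)$ satisfies $g''(t)=6(t-1)$, so it is convex only for $t\ge 1$, and near $t=0$ it behaves like $2t$, so the cell bound degenerates to something of order $\rho\, b_{\cM}(V)\,\ell^{-6}$ per unit volume instead of $\rho^3 b_{\cM}(V)$. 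The paper takes $\ell\sim b_{\cM}(V)^{1/4}Y^{-\alpha}$ with $1/3<\alpha<3/5$ (Proposition~\ref{prop:est_H_box}): the constraint $\alpha>1/3$ forces $\rho\ell^3=Y^{1-3\alpha}\gg1$, so that $n(n-1)(n-2)\approx n^3$ in a typical cell, while $\alpha<3/5$ keeps the Neumann gap $\sim\ell^{-2}$ large enough compared to the expectation of the softened potential for Temple's inequality to apply. The correct window is thus $\rho^{-1/3}\ll\ell\ll b_{\cM}(V)^{1/4}Y^{-3/5}$, not $\ell\ll\rho^{-1/3}$.

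Your upper bound takes a genuinely different route from the paper, and as sketched it has a real gap. A full Jastrow product $\Psi=\prod_{i<j<k}f_{\cM,R}(x_i-x_j,x_i-x_k)$ over all $\binom{N}{3}$ triples in the thermodynamic box cannot be handled by asserting that the normalization and cross terms are ``suppressed by positive powers of $\rho R^3$'': the number of correlation factors is extensive. Per particle the depletion is indeed $O(\rho^2R^6)$, but $\log\|\Psi\|^2$ then acquires a contribution of order $-N\rho^2R^6\to-\infty$ as $N\to\infty$; both $\langle\Psi,H_{N,L}\Psi\rangle$ and $\|\Psi\|^2$ carry volume-exponential factors that must cancel to the stated precision. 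Controlling this ratio is a cluster-expansion problem for a classical Gibbs measure with three-body weights --- precisely the difficulty that Dyson's sequential nearest-neighbour construction was designed to avoid in the two-body case, and that this paper avoids entirely. The paper never expands a thermodynamic-size correlated state: in Section~\ref{sec:UB} it tiles $\Omega$ with boxes of side $\ell\sim K\ell_{\rm GP}$, $K=Y^{-\alpha}$ with $\alpha$ small, separated by gaps $2R_0$ so that distinct boxes do not interact at all; each box carries the Dirichlet ground state of $n=\lfloor K^3Y^{-2}\rfloor$ particles, whose energy per particle is controlled by importing the Gross--Pitaevskii analysis of~\cite{NamRicTri-21}. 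The trial state is then an exact (uncorrelated) product over boxes, and the only losses are the Dirichlet localization, subleading since $K\gg1$, and the dead space between boxes. If you wish to keep a Jastrow-type construction, you must first confine it to a finite box --- where the particle number is bounded in terms of $Y$ and your expansion parameters are genuinely small --- and then glue boxes exactly as above, with Dirichlet boundary conditions and spacing $2R_0$; the direct thermodynamic-box computation you propose is the missing, and hardest, step.
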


\begin{remark}
	Our proof of the lower bound is easily extendable to hard-core potentials, and the error $\mathcal O(Y^{\nu})$ is uniform in $V$ as long as its range stays bounded or even increases slowly. For the upper bound, however, the error is uniform in $V$ assuming that $R_0/ b_{\cM}(V)^{1/4}, \|V\|_{L^1}b_{\cM}(V)^{-1}$ and $ \|V\|_{L^\infty}b_{\cM}(V)^{1/2}$ stay bounded. Those conditions come from the error estimate in Lemma~\ref{lem:upper_bound} and could be easily improved, see also the end of Section~\ref{sec:UB}. We believe the result to be true for hard-core potentials too, as Dyson proved it for two-body interactions, but we are unfortunately not able to adapt Dyson's analysis of the upper bound and our proof only works for bounded potentials. 
\end{remark}

The proof of Theorem~\ref{theo1} occupies the rest of the paper. In Section~\ref{sec:scattering}, we recall a Dyson's lemma for the three-body interaction potential. This ingredient is similar to that of~\cite{NamRicTri-21}, except that in the present paper we have to take into account the boundary condition carefully since we will apply the lemma to bounded sets. In Section~\ref{sec:LB}, we prove the lower bound $e_{\rm 3B}(\rho) \ge (1/6)b_{\mathcal M}(V) \rho^3 (1 + \mathcal O(Y^\nu))$ by following the localization method of Lieb--Yngvason~\cite{LieYng-98}; more precisely we will use a many-body version of Dyson's lemma and the Temple inequality on small boxes. The upper bound $e_{\rm 3B}(\rho) \le (1/6)b_{\mathcal M}(V) \rho^3 (1 + \mathcal O(Y^\nu))$ essentially follows from our analysis in the Gross--Pitaevskii limit in~\cite{NamRicTri-21}, and the details will be explained in Section~\ref{sec:UB}.

%%%%%%%%%%%%%%%%%%%%%%%%%%%%%%%%%%%%%%%%%%%%%%%%%%%%%%%%%
\section{Dyson's lemma} \label{sec:scattering}
%%%%%%%%%%%%%%%%%%%%%%%%%%%%%%%%%%%%%%%%%%%%%%%%%%%%%%%%%

\begin{lemma}[Dyson Lemma for non-radial potentials]\label{lem:dyson_lemma}
	Let $d\ge 3$, $R_2 / 2 > R_1 > R_0 > 0$ and $\{|\bx| \le R_2\} \subset \Omega \subset \mathbb{R}^{d}$ be an open set. Let $0\le v \in L^\infty(\R^d)$ with $\supp\, v \subset \{ |\bx| \le R_0\}$. Then there exists $0\le U \in C(\R^d)$ with $\supp \, U \subset \{ R_1\le |\bx| \le R_2\}$ and $\int_{\R^d} U = 1$ such that the following operator inequality holds on $L^2(\Omega)$
	\[
		- 2 \cM\nabla_{\bx} \1_{\{|\bx| \le R_2\}} \cM\nabla_{\bx}+ v(\bx) \ge b_{\cM}(v) \left( 1- \frac{C_d R_0}{R_1}\right) U (\bx)\,,
	\]
	with a constant $C_d>0$ depending only on the dimension $d$. 
\end{lemma}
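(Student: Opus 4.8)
The plan is to reduce Lemma~\ref{lem:dyson_lemma} to a one-body quadratic-form inequality and prove it by Dyson's substitution together with an averaging over spheres. Since $v$ and $U$ are supported in $B_{R_2}:=\{|\bx|<R_2\}$ and the operator only counts kinetic energy inside $B_{R_2}$, it suffices to show, for all real $\psi\in H^1(B_{R_2})$ (the complex case follows by splitting into real and imaginary parts),
\[
	\int_{B_{R_2}} 2|\cM\nabla\psi|^2\,\d\bx + \int_{\R^d} v\,\psi^2\,\d\bx \;\geq\; b_{\cM}(v)\Big(1-\tfrac{C_d R_0}{R_1}\Big)\int_{\R^d} U\,\psi^2\,\d\bx ,
\]
with $U\ge 0$ continuous, radial, supported in $\{R_1\le|\bx|\le R_2\}$ and $\int U=1$. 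The key object is the scattering solution $f:=f_\cM$ of~\eqref{eq:def_b}, which solves $-2\Delta_\cM f+vf=0$ with $\Delta_\cM=\nabla\cdot\cM^2\nabla$. For bounded $v$ the strong maximum principle gives $0<f\le1$, and outside $\supp v$ the function $\omega=1-f$ is $\cM$-harmonic, nonnegative, and (after the linear change $\bx=\cM\by$, using that $\cM$ has spectrum bounded above and below by~\eqref{eq:M2}) satisfies the standard decay and gradient bounds $0\le\omega(\bx)\le C_d(R_0/|\bx|)^{d-2}$ and $|\nabla\omega(\bx)|\le C_d R_0^{d-2}/|\bx|^{d-1}$ for $|\bx|\ge R_0$. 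Integrating the scattering equation over $B_r$ with $r>R_0$ yields the flux identity
\[
	\int_{\partial B_r} 2\,(\cM^2\nabla f)\cdot n\,\d\sigma = \int_{B_r} 2\Delta_\cM f\,\d\bx = \int_{\R^d} v f\,\d\bx = b_\cM(v),\qquad r>R_0,
\]
which is \emph{constant in $r$} and strictly positive.

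Next I would perform Dyson's substitution $\psi=fg$. Expanding $|\cM\nabla(fg)|^2$, combining the terms $2g^2|\cM\nabla f|^2+4fg(\cM\nabla f)\cdot(\cM\nabla g)=2(\cM\nabla f)\cdot\cM\nabla(fg^2)$, integrating by parts and using the scattering equation to cancel the potential contribution, one obtains for every $R_0<r\le R_2$ the exact identity
\[
	\int_{B_r}\big(2|\cM\nabla\psi|^2+v\psi^2\big)\,\d\bx = \int_{B_r} 2f^2|\cM\nabla g|^2\,\d\bx + \int_{\partial B_r} 2\,(\cM^2\nabla f)\cdot n\,\tfrac{\psi^2}{f}\,\d\sigma .
\]
Since $\supp v\subset B_{R_0}\subset B_r$ and the kinetic energy on $B_{R_2}\setminus B_r$ is nonnegative, the left-hand side for the full ball dominates the right-hand side for every $r\in(R_1,R_2)$. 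I would then average this family of inequalities against a weight $w(r)=u(r)\,|\partial B_r|\ge0$ built from a radial profile $u$ normalised by $\int_{R_1}^{R_2}w=1$, so that by the coarea formula the boundary terms assemble into a volume integral. Writing $\psi^2/f=fg^2$ on each sphere and setting $\overline{g^2}(r):=\frac{1}{|\partial B_r|}\int_{\partial B_r}g^2\,\d\sigma$, the target main term is $\int_{R_1}^{R_2}w(r)\big[\,2\int_{\partial B_r}(\cM^2\nabla f)\cdot n\,f\,\overline{g^2}(r)\,\d\sigma\big]\,\d r$, while the retained bulk term $\int_{R_1}^{R_2}w(r)\int_{B_r}2f^2|\cM\nabla g|^2$ stays nonnegative.

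The main obstacle is that, for a genuinely non-radial $v$, the flux density $(\cM^2\nabla f)\cdot n$ is \emph{not} sign-definite on the spheres $\partial B_r$: only the total flux above is positive. Hence one cannot read off a nonnegative soft potential $U$ directly. To bypass this I replace $g^2$ by its spherical mean $\overline{g^2}(r)$, after which the surviving density integrates to $2\int_{\partial B_r}(\cM^2\nabla f)\cdot n\,f\,\d\sigma=b_\cM(v)\big(1-\mathcal O((R_0/r)^{d-2})\big)$ by the flux identity and $f=1-\mathcal O((R_0/r)^{d-2})$; this gives the clean contribution $b_\cM(v)\big(1-C_dR_0/R_1\big)\int_{R_1}^{R_2}w\,\overline{g^2}$, and since $g^2=\psi^2/f^2\ge\psi^2$ and $\int_{R_1}^{R_2}w\,\overline{g^2}=\int U g^2\,\d\bx\ge\int U\psi^2\,\d\bx$ with $U(\bx):=u(|\bx|)$, the desired main term appears.

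The only remaining point is the replacement error $\int_{\partial B_r}(\cM^2\nabla f)\cdot n\,f\,(g^2-\overline{g^2})$. Using $|\cM^2\nabla f\cdot n|\le C_d R_0^{d-2}/r^{d-1}$, a Poincaré--Wirtinger inequality on $\partial B_r$, and Young's inequality, this is bounded by $\mathcal O((R_0/r)^{d-2})$ of the main term plus a term with a coefficient of order $(R_0/r)^{2(d-2)}$ multiplying $\int_{\partial B_r}|\cM\nabla g|^2$; integrating in $r$ and swapping orders, the latter is absorbed into the retained nonnegative bulk term on all but a layer of width $\mathcal O((R_0/R_1)^{2(d-2)})$ near $|\bx|=R_2$, whose contribution is itself negligible. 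The hypothesis $R_2>2R_1>2R_0$ guarantees that $f$ is uniformly close to $1$ and bounded away from $0$ on the annulus, making all $\mathcal O((R_0/r)^{d-2})\le \mathcal O(R_0/R_1)$ corrections uniform and leaving enough room to fix the weight $w$. Choosing $u$ continuous, nonnegative and supported in $(R_1,R_2)$ with $\int U=1$ then produces all the asserted properties of $U$ and, collecting the estimates, the stated operator inequality with a dimensional constant $C_d$.
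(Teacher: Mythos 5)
Your skeleton — reduction to a one\nobreakdash-body quadratic-form inequality, the Dyson substitution $\psi=fg$ with its exact identity on $B_r$, the flux identity $\int_{\partial B_r}2(\cM^2\nabla f)\cdot\vec n\,\d\sigma=b_{\cM}(v)$ for $r>R_0$, and the coarea assembly of weighted spherical integrals into $\int U\psi^2$ — is correct and is the same as the paper's. The proof goes wrong at exactly the point you call the ``main obstacle'': your premise that for non-radial $v$ the flux density $(\cM^2\nabla f)\cdot\vec n$ is not sign-definite on spheres is false, and recognizing why is the paper's key step. After the change of variables $\bx=\cM\by$ reducing to $\cM=1$ (the paper's Step~1), the function $\omega=1-f$ solves $-2\Delta\omega=vf\ge 0$, so it is the Newtonian potential of a nonnegative measure of total mass $b(v)/2$ supported in $\{|\by|\le R_0\}$. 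Each point source $\by$ with $|\by|\le R_0<|\bx|$ contributes to $-\nabla\omega(\bx)\cdot\vec n_{\bx}$ the quantity $\frac{(\bx-\by)\cdot\vec n_{\bx}}{|\mathbb{S}^{d-1}|\,|\bx-\by|^{d}}\ge\frac{|\bx|-R_0}{|\mathbb{S}^{d-1}|(|\bx|+R_0)^{d}}>0$, whence the \emph{pointwise} bound
\[
	\nabla f(\bx)\cdot\vec n_{\bx}\;\ge\;\frac{b(v)}{2\,|\mathbb{S}^{d-1}|\,|\bx|^{d-1}}\left(1-\frac{C_d R_0}{|\bx|}\right),\qquad |\bx|\ge R_1\,.
\]
So the flux density is strictly positive outside $\supp v$, with oscillation only $\mathcal{O}(R_0/r)$ relative to its spherical mean. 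With this bound the boundary term in your substitution identity already dominates $\frac{b(v)}{|\mathbb{S}^{d-1}|r^{d-1}}\left(1-\frac{C_dR_0}{r}\right)\int_{\partial B_r}|\psi|^2$ for every $r$ (using $f^{-1}>1$), and the entire machinery of spherical averaging of $g^2$, Poincar\'e--Wirtinger, and absorption becomes unnecessary: one just integrates over $r\in[R_1,R_2]$ against the weight, which is the paper's proof.

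Without that pointwise estimate, your workaround does not close, and the gap is structural, not technical. Using only $|(\cM^2\nabla f)\cdot\vec n|\le C b_{\cM}(v)r^{1-d}$, the Young split of the replacement error produces at each radius $s\in\supp u$ a term $C\epsilon\, b_{\cM}(v)\,u(s)\,s^{2}\int_{\partial B_s}|\nabla_{\mathrm{T}}g|^2\,\d\sigma$ that must be absorbed by the retained bulk term; but the bulk available to the sphere $\partial B_s$ carries the coefficient $W(s):=\int_s^{R_2}u(r)|\mathbb{S}^{d-1}|r^{d-1}\,\d r$, since only balls $B_r$ with $r>s$ contain that sphere. Absorption therefore requires $C\epsilon\,b_{\cM}(v)\,u(s)s^2\le c\,W(s)$ for a.e.\ $s$. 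Since $W'(s)=-u(s)|\mathbb{S}^{d-1}|s^{d-1}$, this is the differential inequality $-W'(s)\le C' s^{d-3}W(s)/b_{\cM}(v)$, and Gronwall forces $W(s)\ge W(R_1)\exp\bigl(-C'(s^{d-2}-R_1^{d-2})/((d-2)\,b_{\cM}(v))\bigr)>0$ up to the outer edge $r_*$ of $\supp u$, contradicting $W(r_*)=0$. Hence \emph{no} admissible weight satisfies the absorption condition everywhere: it must fail on a set of positive measure adjacent to $r_*$ — this is your ``layer''. Its contribution is not negligible: there the unabsorbed error is a positive coefficient times $\int_{\partial B_s}|\nabla_{\mathrm{T}}g|^2$, which is controlled by nothing you retain (the main term only sees spherical averages of $g^2$). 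Taking $g=1$ plus a high-frequency spherical harmonic supported in that layer drives your lower bound to $-\infty$ while the target right-hand side is unchanged, so the chain of estimates does not prove the lemma. The missing ingredient is precisely the Newtonian-potential bound above; once it is in hand, your averaging is sound but redundant.
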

\begin{proof}
	The proof is a simple extension of the one in~\cite{NamRicTri-21} to general domains, we sketch the main steps. 
	
	\medskip
	\textit{Step 1: Removing the $\cM$ dependence.} We see that by the change of variable $\bx =\cM {\bf y}$, it is enough to prove that under the same assumptions
	\[
		- 2 \nabla_{\by} \1_{\{|\cM \by| \le \sqrt{2/3} R_2\}} \nabla_{\by}+ v(\cM \by) \ge b(v(\cM \cdot )) \left( 1- \frac{C_d R_0}{R_1}\right) U (\cM \by)\quad \text{ on }L^2(\Omega)\,.
	\]
	We assume $R_1 > 2 R_0$ without loss of generality as, otherwise, the claim holds with~$C_d = 2$. Using that $\sqrt{1/2} \le \cM \le \sqrt{3/2}$, we obtain $\{|\by| \le \sqrt{2/3} R_2\} \subset \{|\cM \by| \le R_2\}$ and $\supp \, v(\cM \cdot) \subset \{|\bx| \le \sqrt{2} R_0\}$. So it is sufficient to prove the original claim for $\cM = 1$, $R_0' = \sqrt 2 R_0$, $R_1' = \sqrt 2 R_1$ and $R_2' = \sqrt{2/3} R_2$.
	
	\medskip
	\textit{Step 2: The case $\cM = 1$.} We abuse notation and omit the prime in $R_i'$ for $0\le i \le 2$. As proved in~\cite{NamRicTri-21}, there exists a solution $f$ to the scattering equation 
	\[
		0< f(\bx) \le 1, \quad -2\Delta f(\bx) + v(\bx) f(\bx) = 0\,, \quad \forall \bx\in \R^6\,, \quad \textrm{ and } \quad \lim_{|\bx|\to \infty} f(\bx) =1\,.
	\]
	Then for any $\phi \in C^\infty(\Omega)$ compactly supported, we define $\eta = \phi / f$, and for any $R \in [R_1,R_2]$, using the equation satisfied by $f$, we obtain
	\begin{multline*}
		\int_{B(0,R_2)} \left( 2|\nabla \varphi|^2 + v |\varphi|^2 \right) \\
		\ge \int_{B(0,R)} \left( 2|\nabla \varphi|^2 + v |\varphi|^2 \right) = \int_{B(0,R)} 2 f^2 |\nabla \eta |^2 + \int_{\partial B(0,R)} 2 |\varphi|^2 f^{-1} \nabla f \cdot \vec n \\
		\ge 2 \int_{\partial B(0,R)} |\varphi|^2 f^{-1}\nabla f \cdot \vec n\,,
	\end{multline*}
	where $\vec n_{\bx}= \bx/|\bx|$ is the outward unit normal vector on the sphere $\partial B(0,R)$. Now using that $|\mathbb{S}^{d-1}|^{-1} |\bx|^{d-1}$ is the Green function of the Laplacian on $\mathbb{R}^{d}$, we obtain
	\[
		\nabla f(\bx) \cdot \vec n_{\bx} \ge \frac{1}{2 |\mathbb{S}^{d-1}|} \int_{\R^d} \frac{v({\bf y}) f({\bf y})}{|\bx|^{d-1}} \left( 1 - C_d \frac{R_0}{|\bx|} \right) \d {\bf y} = \frac{b(v)}{ 2 |\mathbb{S}^{d-1}| |\bx|^{d-1}} \left( 1- \frac{C_d R_0}{|\bx|}\right). 
	\]
	Then, using $f^{-1} > 1$, we have for all $R \in [R_1,R_2]$,
	\[
		\int_{B(0,R_2)} \left( 2 |\nabla \varphi|^2 + v |\varphi|^2 \right) \ge \frac{b(v)}{ |\mathbb{S}^{d-1}| R^{d-1}} \left( 1- \frac{C_d R_0}{R}\right) \int_{\partial B(0,R)} |\varphi|^2.
	\]
	Now letting $U(R) = |\{ R_1 \le |\bx| \le R_2\}|^{-1} \1_{R_1\leq R \leq R_2}$, multiplying by $|\mathbb{S}^{d-1}| R^{d-1}$ on both sides, and integrating over $[R_1,R_2]$, one obtains the desired result.
\end{proof}

%%%%%%%%%%%%%%%%%%%%%%%%%%%%%%%%%%%%%%%%%%%%%%%%%%%%%%%%%
\section{Lower bound} \label{sec:LB}
%%%%%%%%%%%%%%%%%%%%%%%%%%%%%%%%%%%%%%%%%%%%%%%%%%%%%%%%%
In this section we prove the lower bound
\[
	e(\rho) \ge \frac{1}{6} b_{\mathcal M}(V) \rho^3 (1 + O(Y^\nu))\,, \quad \textrm{ when } \quad Y:=\rho b_{\cM}(V)^{3/4} \to 0\,.
\]
We follow the strategy of~\cite{LieYng-98} and estimate the energy of the box $[-L/2,L/2]^3$ by the sum of the energy on smaller boxes $[-\ell/2,\ell/2]^3$ with Neumann boundary conditions. The interaction among the boxes is discarded (here the positivity of the potential is crucial) and the number of particles in each box is controlled by a sub-additivity argument. Let us introduce the Hamiltonian
\[
	\widetilde H_{n,\ell} = \sum_{i=1}^n -\Delta_{x_i} + \sum_{1\le i < j < k \le n} \ell^2 V\left(\ell(x_i-x_j,x_i-x_k)\right)\quad \text{ on }L^2_s\!\left([-1/2,1/2]^{3N}\right),
\]
where $-\Delta$ is the Neumann Laplacian on $L^2([-1/2,1/2]^3)$ and $\ell >0$. Denoting $\mathcal U \Psi = \ell^{3n/2}\Psi(\ell \cdot)$, one has $ H_{n,\ell} = \ell^{-2} \mathcal U^* \widetilde{H}_{n,\ell} \mathcal U$ acting on $L^2_s([-\ell/2,\ell/2]^{3N})$, where we recall that $H_{n,\ell}$ is defined in~\eqref{eq:H_NL}. We have the following result.

\begin{proposition}[Energy at short length scales]\label{prop:est_H_box}
	Let
	\[
		\frac{1}{3} < \alpha < \frac{3}{5}\,, \quad \ell \sim b_{\mathcal M}(V)^{1/4} \left(\rho b_{\mathcal M}(V)^{3/4}\right)^{-\alpha} \quad \textrm{ and } \quad 0 \le n \le 10 \rho \ell^3\,.
	\]
	Then we have
	\[
		\widetilde H_{n,\ell} \ge \frac{b_{\mathcal M}(V)}{6 \ell^4} n(n-1)(n-2) + \mathcal O\! \left(\rho^3 b_{\mathcal M}(V)\ell^5 Y^{\nu}\right),
	\]
	for some constant $\nu >0$.
\end{proposition}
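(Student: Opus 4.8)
The plan is to run the two-step Lieb--Yngvason scheme adapted to three-body interactions. In the first step I would use a \emph{many-body} version of Dyson's lemma to trade the short-range, possibly large potential $\ell^{2}V(\ell\,\cdot)$ for a soft, spread-out three-body potential carrying the same scattering energy, spending only a small fraction of the kinetic energy. In the second step I would treat the resulting soft Hamiltonian on the unit box by first-order perturbation theory, made rigorous by Temple's inequality, the constant Neumann ground state providing the leading term.

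For Step~1, I would split $\sum_i -\Delta_{x_i} = (1-\eps)\sum_i -\Delta_{x_i} + \eps\sum_i -\Delta_{x_i}$ and keep the last piece for Step~2. The relative part of $-\Delta_{x_i}-\Delta_{x_j}-\Delta_{x_k}$ is exactly the operator $2|\cM\nabla|^{2}$ appearing in~\eqref{eq:remove-center}, so, after the rescaling identity $b_{\cM}(\ell^{2}V(\ell\,\cdot)) = \ell^{-4}b_{\cM}(V)$, Lemma~\ref{lem:dyson_lemma} applied in the relative coordinate $(x_i-x_j,x_i-x_k)$ of each triplet replaces $\ell^{2}V$ by $\tfrac{b_{\cM}(V)}{\ell^{4}}(1-CR_0/R_1)\,U_{ijk}$, where $U_{ijk}=U(x_i-x_j,x_i-x_k)$ with $U\ge 0$, $\int_{\R^{6}}U=1$, supported where the three particles are pairwise within $\sim R_2/\ell$. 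The target of this step is
\[
	(1-\eps)\sum_{i} -\Delta_{x_i} + \sum_{i<j<k}\ell^{2}V(\ell(x_i-x_j,x_i-x_k)) \ge (1-o(1))\,\frac{b_{\cM}(V)}{\ell^{4}}\sum_{i<j<k} U_{ijk}.
\]

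The main obstacle is precisely this many-body reduction. The cutoff $\1_{\{|\bx|\le R_2\}}$ of Lemma~\ref{lem:dyson_lemma} confines the kinetic energy used by the triplet $(i,j,k)$ to the region where its relative coordinate is $\lesssim R_2/\ell$; but one particle may be the apex of several interacting triplets, so naively summing the single-triplet bounds over-counts $-\Delta_{x_i}$. I would control this by a nearest-neighbour bookkeeping in the spirit of Lieb--Yngvason: the spatial localization makes the regions used by distinct triplets essentially disjoint, and the residual over-counting is bounded through the diluteness estimate $n(R_2/\ell)^{3}\to 0$, valid since $n\le 10\rho\ell^{3}$. Spreading $U$ over a scale much larger than the range $R_0/\ell$ of the rescaled potential but still $\ll 1$ keeps $R_0/R_1\to 0$ (so the Dyson factor tends to $1$) while making $\|U\|_{L^\infty}$ small; the reserved fraction $\eps$ is then sent to $0$ slowly, which costs only a relative $O(\eps)$ in the scattering energy by concavity of $v\mapsto b_{\cM}(v)$.

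For Step~2 I would apply Temple's inequality to $\eps\sum_i -\Delta_{x_i} + (1-o(1))\tfrac{b_{\cM}(V)}{\ell^{4}}\sum_{i<j<k}U_{ijk}$, with unperturbed operator $\eps\sum_i-\Delta_{x_i}$, whose Neumann ground state on $[-1/2,1/2]^{3n}$ is the constant $\psi_0\equiv 1$ and whose gap is $\eps\pi^{2}$. First-order perturbation theory gives $\langle\psi_0,\sum_{i<j<k}U_{ijk}\,\psi_0\rangle\approx\binom{n}{3}$ (since $\int U=1$ and the centre-of-mass volume is $\approx 1$), hence the leading term $\tfrac{b_{\cM}(V)}{6\ell^{4}}n(n-1)(n-2)$. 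Temple's error is the variance of $\tfrac{b_{\cM}(V)}{\ell^{4}}\sum U_{ijk}$ in the state $\psi_0$ divided by the gap $\eps\pi^2$ minus the first-order shift; the numerator is dominated by the diagonal $\sum_{i<j<k}\int U_{ijk}^{2}\lesssim\binom{n}{3}\|U\|_{L^\infty}$ together with the covariances of triplets sharing one or two particles (disjoint triplets being uncorrelated for $\psi_0$), all small by diluteness, while the denominator is $\approx\eps\pi^{2}$ because the shift $\tfrac{b_{\cM}(V)}{\ell^{4}}\binom n3\sim Y^{3-5\alpha}\to 0$ for $\alpha<3/5$. Collecting the Dyson loss $O(\eps)$, the Temple error, and the replacement of $n(n-1)(n-2)$ by $n^{3}$ (harmless since $n\sim\rho\ell^{3}\to\infty$ when $\alpha>1/3$), and optimizing the free parameters, should yield the remainder $o(\rho^{3}b_{\cM}(V)\ell^{5})$; the two-sided bound $1/3<\alpha<3/5$ is exactly what makes all these errors vanish together.
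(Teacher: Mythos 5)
Your overall scheme --- a many-body Dyson reduction followed by Temple's inequality with the constant Neumann ground state --- is the same as the paper's, and your Step~2 is essentially the paper's computation. The genuine gap is in Step~1, precisely at the point you yourself flag as ``the main obstacle.'' The target inequality you state,
\[
	(1-\eps)\sum_{i} -\Delta_{x_i} + \sum_{i<j<k}\ell^{2}V(\ell(x_i-x_j,x_i-x_k)) \ \ge\ (1-o(1))\,\frac{b_{\cM}(V)}{\ell^{4}}\sum_{i<j<k} U_{ijk}\,,
\]
cannot be obtained by summing triplet-wise applications of Lemma~\ref{lem:dyson_lemma}, and your proposed repair --- ``the residual over-counting is bounded through the diluteness estimate $n(R_2/\ell)^{3}\to 0$'' --- cannot work, because it confuses an operator inequality with an expectation bound. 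A quadratic-form inequality must hold on \emph{every} wave function, including those concentrated on clustered configurations in which one particle lies within range $R_2/\ell$ of many pairs simultaneously; there each $-\Delta_{x_i}$ would have to be spent once per active triplet. The constraint $n\le 10\rho\ell^{3}$ restricts the particle number, not the configurations, so it provides no pointwise control of this over-counting.

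The correct mechanism (Lemma~\ref{lem:gen_dyson_lemma}, following~\cite{NamRicTri-21}) builds the exclusion into the soft potential itself: every triplet term is multiplied by the cutoff $F_{ijk}$ of~\eqref{eq:Fijk}, which vanishes unless the triplet is isolated (no fourth particle within $2R$ of its centroid). This gives the deterministic ``no four-body collision'' bound $\sum_{j,k}F_{ijk}\le 2$ of~\eqref{eq:no_4_body}, which is exactly what guarantees that each $p_i^2$ is used at most once when the single-triplet Dyson bounds are summed. The price is that the conclusion of the many-body lemma is not $\sum_{i<j<k}U_{ijk}$ but its cutoff version, carrying the factors $\prod_{l\neq i,j,k}\theta_{2R}\left(\frac{x_i+x_j+x_k}{3}-x_l\right)$ and $\1_{\Lambda_\eta}(x_i)$. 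Diluteness then enters legitimately, but only in the Temple step: one shows that \emph{removing} these cutoffs from $\langle \Psi_0, \mathcal W_U\Psi_0\rangle$ costs a relative error $O\!\left(nR^3+\eta+b_{\cM}(V)^{1/4}/(\ell R)\right)$, because $\Psi_0$ is the constant function. So diluteness controls the loss caused by the cutoff in expectation; it does not license the uncut operator inequality you propose. Once the cutoffs are inserted in Step~1 and removed in expectation in Step~2 (your covariance bookkeeping for the variance is fine, and in fact slightly sharper than the paper's crude bound $\langle\Psi_0,\mathcal W_U^2\Psi_0\rangle\le C(n\|U\|_{L^\infty}R^{-6}b_{\cM}(V)\ell^{-4})^2$, which rests on the pointwise ``at most one active triplet per particle'' property supplied by the same cutoff), your parameter discussion goes through and matches the paper's.
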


\begin{remark}
	By analogy to the two-body case, one can define the parameter $a = b_{\cM}(V)^{1/4}$ that plays the role of a scattering length. Using the scaling properties of $b_{\cM}(V)^{1/4}$, one can rewrite the potential
	\[
		\ell^2 V\left(\ell(x-y,x-z)\right) = \left(\frac{\ell}{a}\right)^2 \widetilde V \left(\frac{\ell}{a}(x-y,x-z)\right),
	\]
	with $b_{\cM}(\widetilde V) = 1$. In~\cite{NamRicTri-21}, the Gross--Pitaevskii regime was studied where the potential scaled like $n \widetilde V(n^{1/2}(x-y,x-z))$ for a potential $\widetilde V$ with fixed scattering energy. Assuming that there are approximately $n \simeq \rho\ell^3$ particles in the box, this corresponds to taking
	\[
		\left(\frac{\ell_{\rm GP}}{a}\right)^2 = \rho \ell_{\rm GP}^3 \iff \ell_{\rm GP} = \frac{a}{\rho a^3}\,.
	\]
	We will however consider length scales \emph{much shorter} than in the Gross--Pitaevskii regime. More precisely, we will choose $\ell \sim a (\rho a^3)^{-\alpha} \sim b_{\mathcal M}(V)^{1/4} Y^{-\alpha} $ for some $\alpha < 1$. For those length scales the gap of kinetic operator is large enough to apply the Temple inequality.
\end{remark}

%%%%%%%%%%%%%%%%%%%%%%%%%%%%%%
\subsection{Proof of Proposition~\ref{prop:est_H_box}}
%%%%%%%%%%%%%%%%%%%%%%%%%%%%%%

We will replace the singular potential $\ell^2 V\left(\ell\cdot\right)$ by a softer one $N^{-2} R^{-6}U(R^{-1}\cdot)$ and then use the Temple inequality to conclude. For the trapped systems in $\R^3$ considered in~\cite{NamRicTri-21}, the renormalization of the potential was implemented using~\cite[Lemma 8]{NamRicTri-21}. This result holds on the whole $\mathbb{R}^{3}$ and cannot be directly applied to our case because of the presence of the boundary. However, a very simple adaptation of it gives the following result, for which we define 
\[
	\Lambda_{\eta} = (1-\eta) [-1/2,1/2]^{3}, \quad \text{ for }\eta>0\,.
\]

\begin{lemma}[Many-body Dyson lemma in a finite box] \label{lem:gen_dyson_lemma}
	Let $0\le W \in L^\infty(\R^6)$ be supported in $B(0,R_0)$ and satisfy the symmetry~\eqref{eq:sym}. Define $U$ as in Lemma~\ref{lem:dyson_lemma}, with $\supp \, U \in \{R_1 \le |\bx| \le R_2\}$ and $U_R=R^{-6}U(R^{-1}\cdot)$ for $R>0$. Then, for all $R,\eta>0$ such that $\eta> R_2 R$ and $R_1 R > R_0$, the following holds on $L^2_s(\Lambda^N)$
	\begin{align*}
		\sum_{i=1}^N p_i^2 &+\frac{1}{6} \sum_{\substack{ 1\le i,j,k \le N \\ i\neq j \neq k \neq i } } W (x_i-x_j, x_i-x_k) \\
		&\ge \begin{multlined}[t]
			\frac{b_{\cM}(W)}{6} \left( 1 - \frac{C R_0}{R}\right) \times \\
			\times \sum_{\substack{ 1\le i,j,k \le N \\ i\neq j \neq k \neq i } } U_R(x_i-x_j, x_i-x_k) \1_{\Lambda_\eta}(x_i) \prod_{l \neq i,j,k} \theta_{2R}\left(\frac{x_i+x_j+x_k}{3}-x_l \right).
		\end{multlined}
	\end{align*}
	Here $C>0$ is a universal constant (independent of $W,R,U,N,\eta$).
\end{lemma}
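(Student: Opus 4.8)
The plan is to deduce the many-body inequality from the single-triple statement of Lemma~\ref{lem:dyson_lemma} applied in dimension $d=6$, organizing the $N$ particles exactly as in~\cite{NamRicTri-21}; the only genuinely new ingredient is the boundary, which is encoded by the apex cutoff $\1_{\Lambda_\eta}(x_i)$ together with the requirement $\eta>R_2R$.

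First I would set up relative coordinates for a single triple. Fixing three labels $i,j,k$ and freezing the remaining variables, I write $\bx=(x_i-x_j,\,x_i-x_k)\in\R^6$; by~\eqref{eq:remove-center}--\eqref{eq:M2} the three-particle relative kinetic form equals $2|\cM\nabla_{\bx}|^2$, the non-negative centre-of-mass part being discarded. Applying Lemma~\ref{lem:dyson_lemma} in $d=6$ with shell radii $R_1R$ and $R_2R$ (its hypotheses reduce to $R_1R>R_0$ together with $R_2>2R_1$, the latter built into the construction of $U$) produces exactly the uniform shell potential $U_R$ and yields, in the variable $\bx$,
\[
	-2\cM\nabla_{\bx}\,\1_{\{|\bx|\le R_2R\}}\,\cM\nabla_{\bx}+W(\bx)\ \ge\ b_{\cM}(W)\Big(1-\tfrac{C R_0}{R}\Big)\,U_R(\bx).
\]
Here $R_1R>R_0$ forces $\supp W\subset B(0,R_0)$ and $\supp U_R\subset\{R_1R\le|\bx|\le R_2R\}$ to be disjoint, so no interaction energy is lost in passing to the shell; moreover the kinetic term is already localized to $\{|\bx|\le R_2R\}$, which is precisely what will let me add up the one-triple bounds without double counting.

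Next comes the many-body assembly. On the potential side, symmetrization is immediate: by the three-body symmetry~\eqref{eq:sym} all six orderings of an unordered triple give the same value of $W$ and of $U_R$, so $\tfrac16\sum_{i\neq j\neq k}W$ and the corresponding soft-potential sum reduce to sums over unordered triples, the apex cutoff merely weighting each unordered triple by $\tfrac13(\1_{\Lambda_\eta}(x_i)+\1_{\Lambda_\eta}(x_j)+\1_{\Lambda_\eta}(x_k))\le 1$. In particular, whenever the right side is nonzero at least one vertex lies in $\Lambda_\eta$, and I run the displayed one-triple inequality with that vertex as apex. The factors $\prod_{l\neq i,j,k}\theta_{2R}\big(\tfrac{x_i+x_j+x_k}{3}-x_l\big)$ then restrict the sum to triples whose barycentre has no other particle within distance $2R$; since $U_R$ forces each such triple to sit at scale $|\bx|\le R_2R$, the chosen scale relations make two contributing triples vertex-disjoint. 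Consequently, at every configuration each particle belongs to at most one contributing triple, the localized relative kinetic energies attached to distinct triples add up to at most $\sum_i p_i^2$, and summing the per-triple bounds gives the claim on $L^2_s(\Lambda^N)$.

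The main obstacle, and the only place the finite box really enters, is guaranteeing that the single-triple Lemma~\ref{lem:dyson_lemma} is actually applicable for every contributing triple: that lemma needs the ball $\{|\bx|\le R_2R\}$ to lie in the admissible domain, i.e. all relative configurations with $|\bx|\le R_2R$ must correspond to partner positions $x_j,x_k$ still inside $\Lambda$. Choosing an apex with $x_i\in\Lambda_\eta$ and $\eta>R_2R$ secures this: since $\|\cM\|\le\sqrt{3/2}$, the bound $|\bx|\le R_2R$ forces $|x_i-x_j|,|x_i-x_k|\lesssim R_2R<\eta$, so $x_j,x_k$ remain in $\Lambda$ and the relative ball is contained in the domain. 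With this adaptation the combinatorial bookkeeping is identical to that of~\cite{NamRicTri-21}, and the universal constant $C$ absorbs the fixed order-one ratios involving $R_1$ and $R_2$.
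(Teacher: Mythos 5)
Your overall strategy is the same as the paper's (a one‑triple Dyson lemma in the relative variable, spectator cutoffs $\theta_{2R}$ to control which triples are active, and $\eta>R_2R$ to fit the relative ball inside the box), but there is a genuine gap at the one step that carries the whole weight of the many‑body lemma: the distribution of the kinetic energy among triples. Your one‑triple inequality localizes the kinetic energy only through $\1_{\{|\bx|\le R_2R\}}$, and after summation you need the purely kinetic inequality $\sum_{T}\,2\,\cM p_{{\bf r}_T}\1_{\{|{\bf r}_T|\le R_2R\}}\cM p_{{\bf r}_T}\le \sum_i p_i^2$ (sum over unordered triples $T$, ${\bf r}_T$ the relative coordinate of $T$). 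You justify it by the vertex‑disjointness of \emph{contributing} triples, meaning those where $U_R\prod_l\theta_{2R}\neq0$. That is a non sequitur: disjointness of the triples appearing on the right‑hand side (which is indeed correct geometrically) says nothing about the kinetic localizers, which are active on the much larger set where the three particles are merely within $R_2R$ of one another, irrespective of spectators and of whether $U_R$ vanishes. Concretely, put four particles at mutual distances of order $R_0$ and all others far away: all four ball localizers are identically $1$ near such configurations, each particle's kinetic energy is claimed by three triples, and on wavefunctions supported there whose cluster has vanishing total momentum the left‑hand side of the displayed inequality equals $\tfrac83\sum_{i=1}^4\langle p_i^2\rangle$, strictly larger than $\sum_{i=1}^4\langle p_i^2\rangle$; the $W$‑terms match exactly on the two sides of your assembly and cannot absorb this (for a cluster of $m$ particles the overcounting grows like $\binom{m-1}{2}$). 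Relatedly, ``I run the one‑triple inequality with that vertex as apex'' is a configuration‑dependent choice that has no meaning at the operator level unless implemented by cutoff functions sandwiched between the momenta, and the bookkeeping via ``all six orderings give the same value of $U_R$'' is false: $U_R$ is radial on $\R^6$, and radial functions do not satisfy the three‑body symmetry~\eqref{eq:sym}.

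The paper's proof does the combinatorics \emph{before} the Dyson lemma and builds the spectator cutoffs into the kinetic localizer. It defines $F_{ijk}$ as in~\eqref{eq:Fijk} — pairwise proximity at scale $R$ \emph{times} the factors $\theta_{2R}$ keeping every other particle away from the barycenter — proves the pointwise ``no four‑body collision'' bound~\eqref{eq:no_4_body}, deduces the operator inequality $\sum_i p_i^2\ge\frac16\sum_{i\neq j\neq k\neq i}\sum_{q\in\{i,j,k\}}p_q F_{ijk}p_q$, and only then, for each ordered triple and on each fiber of fixed barycenter $r_i$ and spectator positions, bounds $F_{ijk}\ge\1_{\{|{\bf r}_{jk}|\le R/2\}}\prod_{l}\theta_{2R}(r_i-x_l)\1_{\Lambda_\eta}(r_i)$ and applies Lemma~\ref{lem:dyson_lemma}. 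Your order of steps could in principle be repaired: multiply your one‑triple inequality by the cutoff $\prod_{l}\theta_{2R}(r_i-x_l)\1_{\Lambda_\eta}(r_i)$, which is legitimate because this function depends only on the barycenter and the spectators and hence commutes with the relative momenta, and then prove the no‑collision bound for the \emph{merged} localizers $\1_{\{|{\bf r}_{jk}|\le R_2R\}}\prod_{l}\theta_{2R}(r_i-x_l)$. But that multiplication and that pointwise bound are exactly what is missing from your write‑up, and they are the actual mathematical content of the lemma; without them the summation step fails.
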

\begin{proof}
	The proof is similar to that of~\cite[Lemma 8]{NamRicTri-21}, we will just recall the main steps. First, we denote $\chi_R= \1_{\{|x|\le R\}}=1- \theta_R$ for $x\in \R^3.$ For $(x_1,\ldots,x_M)\in (\R^3)^M$ and $i,j,k\in \{1,2,\ldots,M\}$ with $i\ne j\ne k\ne i$, we define
	\begin{equation} \label{eq:Fijk}
		F_{ijk} := \chi_R(x_i-x_j)\chi_R(x_i-x_k) \chi_R(x_j-x_k) \prod_{l \neq i,j,k} \theta_{2R}\left(\frac{x_i+x_j+x_k}{3} -x_l\right).
	\end{equation}
	This cut-off is such that for every $1\le i\le M$, there could be at most one pair $j,k$ such that $F_{ijk}=F_{ikj}\ne 0$. Thus we have the ``no four-body collision'' bound
	\begin{equation} \label{eq:no_4_body}
		\sum_{\substack{1\le j,k \le M\\ i\ne j \ne k \ne i}} F_{ijk} \le 2. 
	\end{equation}
	Multiplying it by $p_i$ on both sides, summing over $1\le i \le N$, we obtain
	\begin{multline*}
		\sum_{i=1}^N p_i^2 + \frac{1}{6} \sum_{\substack{1\le i,j,k \le N\\ i\ne j \ne k \ne i}} W(x_i-x_j,x_i-x_k) \\
		\ge \frac{1}{6} \sum_{\substack{1\le i,j,k \le N\\ i\ne j \ne k \ne i}} \left[ \sum_{q\in \{i,j,k\}} p_q F_{ijk} p_q + W(x_i-x_j,x_i-x_k) \right].
	\end{multline*}
	Now, we focus on the summand inside the square brackets. Let us change coordinates as in~\eqref{eq:remove-center} and apply Lemma~\ref{lem:dyson_lemma}. It gives
	\begin{multline*}
		\sum_{q\in \{i,j,k\}} p_q F_{ijk} p_q + W(x_i-x_j,x_i-x_k) \\
		\begin{aligned}[t]
			&\ge \left[ 2 \cM p_{{\bf r}_{ij}} \1_{ \{|{\bf r}_{jk}| \le R/2 \}} \cM p_{{\bf r}_{ij}} + W ({\bf r}_{jk})\right] \prod_{l \neq i,j,k} \theta_{2R}(r_i -x_l)\1_{\Lambda_{\eta}}(r_i) \\
			&\ge (1-C R_0/R) b_{\cM} (W) U_R({\bf r}_{jk}) \prod_{l \neq i,j,k} \theta_{2R}(r_i -x_l)\1_{\Lambda_{\eta}}(r_i)\,,
		\end{aligned}
	\end{multline*}
	where we used the notation ${\bf r}_{jk} = (r_j,r_k) \in \R^6$. This finishes the proof.
\end{proof}

Let $\alpha,\beta >0$ and let us define $\ell$ and $R$ via
\[
	Y= \rho b_{\mathcal M}(V)^{3/4}, \qquad \frac{b_{\mathcal M}(V)^{1/4}}{\ell} = Y^\alpha \qquad \textrm{ and } \qquad R = Y^{\beta}.
\]
We also use the notations $\mathcal T = \sum_{i=1}^n -\Delta_{x_i}$ and 
\begin{multline*}
	\mathcal W_U = \frac{b_{\mathcal M}(V)}{6 \ell^4} \left( 1 - \frac{C R_0}{R}\right) \times \\
	\times \sum_{\substack{ 1\le i,j,k \le n \\ i\neq j \neq k \neq i } } U_R(x_i-x_j, x_i-x_k) \1_{\Lambda_\eta}(x_i) \prod_{l \neq i,j,k} \theta_{2R}\left(\frac{x_i+x_j+x_k}{3}-x_l \right).
\end{multline*}
Then for $\eta = 2 R > 4 R_0 = C b_{\mathcal M}(V)^{1/4} / \ell $ and $\varepsilon >0$, applying Lemma~\ref{lem:gen_dyson_lemma} we obtain
\[
	H_n \ge \varepsilon \mathcal T + (1-\varepsilon) \mathcal W_U\,.
\]
We now use the Temple inequality. Let us consider $\varepsilon \mathcal T$ as the unperturbed Hamiltonian. Its ground state, associated to $\lambda_0 = 0$, is $\Psi_0 = 1$ and its second eigenvalue is $\lambda_1 = \varepsilon \pi$. The perturbation $(1-\varepsilon) \mathcal W_U$ is non-negative thence, as long as
\begin{equation} \label{eq:cond_Temple}
	\lambda_1 - \lambda_0 > \langle \Psi_0, (1-\varepsilon) \mathcal W_U \Psi_0 \rangle\,,
\end{equation}
it holds by the Temple inequality that
\begin{equation} \label{eq:Temple}
	E_0 \ge \lambda_0 + \langle \Psi_0, (1-\varepsilon) \mathcal W_U \Psi_0 \rangle - (1-\varepsilon)^2 \frac{\langle \Psi_0,\mathcal W_U^2 \Psi_0 \rangle - \langle \Psi_0,\mathcal W_U \Psi_0 \rangle^2}{\lambda_1 - \lambda_0 - (1-\varepsilon) \langle \Psi_0, \mathcal W_U \Psi_0 \rangle}\,.
\end{equation}
We need to control $ \langle \Psi_0, \mathcal W_U \Psi_0\rangle $ and $ \langle \Psi_0, \mathcal W_U^2 \Psi_0\rangle$ and to obtain a good lower bound for $(1-\varepsilon)\langle \Psi_0, \mathcal W_U \Psi_0\rangle$. A simple computation yields
\[
	\langle \Psi_0,\mathcal W_U \Psi_0 \rangle \le C \frac{b_{\mathcal M}(V)}{\ell^4} n^3 \le C Y^{3-5\alpha}.
\]
Therefore, taking $\varepsilon \gtrsim Y^{3-5\alpha}$, we ensure that~\eqref{eq:Temple} is valid.
In order to estimate $ \langle \Psi_0, \mathcal W_U^2 \Psi_0\rangle$, we use the pointwise bound
\[
	\sum_{\substack{ 1\le i,j,k \le n \\ i\neq j \neq k \neq i } } U_R(x_i-x_j, x_i-x_k) \1_{\Lambda_\eta}(x_i) \prod_{l \neq i,j,k} \theta_{2R}\left(\frac{x_i+x_j+x_k}{3}-x_l \right) \le C \|U\|_{L^\infty} R^{-6} n\,,
\]
from which we obtain 
\[
	\langle \Psi_0, \mathcal W_U^2 \Psi_0\rangle \le C \left(n R^{-6} \frac{b_{\mathcal M}(V)}{\ell^4}\right)^2 \le C Y^{2 +2\alpha - 12 \beta}.
\]
Consequently, the last term in~\eqref{eq:Temple} is estimated by
\begin{align*}
(1-\varepsilon)^2 \frac{\langle \Psi_0,\mathcal W_U^2 \Psi_0 \rangle - \langle \Psi_0,\mathcal W_U \Psi_0 \rangle^2}{\lambda_1 - \lambda_0 - (1-\varepsilon) \langle \Psi_0, \mathcal W_U \Psi_0 \rangle} 
	&\le C \varepsilon^{-1} \frac{Y^{2 +2\alpha - 12 \beta}}{1 - C \varepsilon^{-1}Y^{3-5\alpha}} \\
	&\le C \varepsilon^{-1} \rho^3 b_{\mathcal M}(V) \ell^5 Y^{7\alpha - 12 \beta - 1}.
\end{align*}
We now estimate by below the second term in~\eqref{eq:Temple}. Using that 
\[
	\prod_{l \neq i,j,k} \theta_{2R}\left(\frac{x_i+x_j+x_k}{3}-x_l \right) \ge 1 - \sum_{l \neq i,j,k} \left( 1 - \theta_{2R}\left(\frac{x_i+x_j+x_k}{3}-x_l \right)\right),
\]
we obtain
\begin{align*}
\langle \Psi_0,\mathcal W_U \Psi_0 \rangle 
	&\ge \frac{b_{\mathcal M}(V)}{6 \ell^4} n(n-1)(n-2) \left(1 - C nR^3 \right) (1-\eta)\left( 1 - \frac{C b_{\mathcal M}(V)^{1/4}}{\ell R}\right) \\
	&\ge \frac{b_{\mathcal M}(V)}{6 \ell^4} n(n-1)(n-2) \left(1 - C \left(Y^{1-3(\alpha-\beta)} + Y^{\beta} + Y^{\alpha-\beta}\right)\right). \\	
\end{align*}
Thus the Temple inequality~\eqref{eq:Temple} leads to
\begin{multline*}
	E_0 \ge \frac{b_{\mathcal M}(V)}{6 \ell^4} n(n-1)(n-2) \\
	- C \rho^3 b_{\mathcal M}(V) \ell^5 \left(Y^{1-3(\alpha-\beta)} + Y^{\beta} + Y^{\alpha-\beta} + \varepsilon + \varepsilon^{-1} Y^{7\alpha - 12 \beta - 1} \right).
\end{multline*}
We now take $\alpha$ and $\beta$ to satisfy the conditions
\[
	\frac{1}{3} < \alpha < \frac{3}{5} \quad \textrm{ and } \quad \alpha - \frac{1}{3} < \beta < \frac{7\alpha - 1}{12}\,,
\]
for which the range for $\beta$ is non empty. With this choice of parameters and choosing $\varepsilon = Y^{(7\alpha-12\beta -1)/2} \gg Y^{3 - 5\alpha}$, we conclude the proof of Proposition~\ref{prop:est_H_box}.
\qed

%%%%%%%%%%%%%%%%%%%%%%%%%%%%%%
\subsection{Proof of the lower bound in Theorem~\ref{theo1}}
%%%%%%%%%%%%%%%%%%%%%%%%%%%%%%

Let us divide $\Omega = [-L/2,L/2]^3$ in $M(N,L)^3$ boxes of size $\ell (N,L)>0$, where $ M \in \mathbb{N}$. We therefore have the relation $L = M \ell$. More precisely, we parametrize $M = \lfloor L Y^\alpha \rfloor$, for some $1/3 < \alpha < 3/5$, so that
\[
	\lim_{\substack{N \to \infty \\ N /L^3 \to \rho}}\ell (L,N) = \frac{b_{\cM}(V)^{1/4}}{Y^{\alpha}}\,.
\]
Let $(B_i)_{1\le i \le M^3}$ be the boxes and let us use the notation $\overline{A} = \mathbb{R}^{3} \setminus A$. For $\Psi \in L^2_s(\Omega^N)$ and $1\le k \le N$, we define 
\[
	c_{k} = \frac{1}{M^3} \int_{\Omega^N} \sum_{i=1}^{M^3} \delta_{k, \sum_{n=1}^N \mathds{1}_{B_{i}}(x_n)} |\Psi|^2 = \frac{1}{M^3} \binom{N}{k} \sum_{i=1}^{M^3}\int_{B_i^{k} \times \overline{B_i}^{N-k}} |\Psi|^2 \,,
\]
where the second equality is obtained by expanding $1 = ( \mathds{1}_{B_i} + \mathds{1}_{\overline{B_i}})^{N}$ and using the symmetry of $\Psi$. The quantity $M^3 c_k$ is the expected number of boxes containing exactly $k$ particles. Using that $\int |\Psi|^2 = 1$, we obtain
\begin{align*}
	\sum_{k=0}^N c_k &= \frac{1}{M^3} \int_{\Omega^N} \sum_{i=1}^{M^3} \sum_{k=0}^N \delta_{k, \sum_{n=1}^N \mathds{1}_{B_{i}}(x_n)} |\Psi|^2 = \int_{\Omega^N} |\Psi|^2 = 1
	\intertext{and}
	\sum_{k=0}^N k c_k &= \frac{1}{M^3} \int_{\Omega^N} \sum_{i=1}^{M^3} k \delta_{k, \sum_{n=1}^N \mathds{1}_{B_{i}}(x_n)} |\Psi|^2 = \frac{1}{M^3} \int_{\Omega^N} N |\Psi|^2 = \rho \ell^{3}.
\end{align*}

Let us now deal with the energy. Using again the symmetry of $\Psi$ we obtain
\begin{multline}
	\langle \Psi, H_N \Psi \rangle \\
	\begin{aligned}[b]
		&\begin{multlined}[t][0.87\textwidth]
			= N \int_{\Omega^N} |\nabla_{1} \Psi (x_1,\dots,x_N)|^2 \\
			+ \frac{N(N-1)(N-2)}{6} \int_{\Omega^N} V(x_1,x_2,x_3) |\Psi (x_1,\dots,x_N)|^2
		\end{multlined} \\
		&\ge \sum_{i=1}^{M^3} N \int_{B_i \times \Omega^{N-1}} |\nabla_{1} \Psi |^2 + \frac{N(N-1)(N-2)}{6} \int_{B_i^3 \times \Omega^{N-3}} V(x_1,x_2,x_3) |\Psi |^2 \\
		&\begin{multlined}[t][0.87\textwidth]
			\ge \sum_{i=1}^{M^3} \sum_{k=0}^{N-1} N \binom{N-1}{k} \int_{B_i^{1+k} \times \overline{B_i}^{N-1-k}} |\nabla_{1} \Psi |^2 \\
			+ \sum_{i=1}^{M^3} \sum_{k=0}^{N-3} \frac{N(N-1)(N-2)}{6} \binom{N-3}{k} \int_{B_i^{3+k} \times \overline{B_i}^{N-3-k}} V(x_1,x_2,x_3) |\Psi |^2
		\end{multlined} \\
		&\begin{multlined}[t][0.87\textwidth]
			\ge \sum_{i=1}^{M^3} \sum_{k=0}^{N-1} k \binom{N}{k} \int_{B_i^{k} \times \overline{B_i}^{N-k}} |\nabla_{1} \Psi |^2 \\
			+ \sum_{i=1}^{M^3} \sum_{k=0}^{N-3} \frac{k(k-1)(k-2)}{6} \binom{N}{k} \int_{B_i^{k} \times \overline{B_i}^{N-k}} V(x_1,x_2,x_3) |\Psi |^2
		\end{multlined} \\
		&\ge M^3 \sum_{k=0}^N c_k E(\ell, k)\,,
	\end{aligned}
\end{multline}
where $E(\ell,k) = \inf \sigma(H_{\ell,k})$ and $V(x_1,x_2,x_3):= V(x_1-x_2,x_1-x_3)$ is symmetric in $x_1,x_2,x_3$. Here, we used that
\begin{multline*}
	k \int_{B_i^{k} \times \overline{B_i}^{N-k}} |\nabla_{1} \Psi |^2 + \frac{1}{6} k(k-1)(k-2) \int_{B_i^{k} \times \overline{B_i}^{N-k}} V(x_1,x_2,x_3) |\Psi |^2 \\
	= \tr H^{B_i}_{k} \gamma_{i,k} \ge E(\ell, k) \tr \gamma_{i,k}\,,
\end{multline*}
where $H^{B_i}_{k} $ is the translation of $H_{k,\ell}$ to the box $B_i$ and the $\gamma_{i,k}$'s, defined as
\begin{multline*}
	\gamma_{i,k} (x_1,\dots,x_k; y_1,\dots,y_k) \\
	:= \binom{N}{k}\int_{B_i^{k} \times \overline{B_i}^{N-k}} \Psi(x_1,\dots,x_k,z_{k+1},\dots,z_{N}) \overline{\Psi(y_1,\dots,y_k,z_{k+1},\dots,z_{N})} \dd \bz \,,
\end{multline*}
satisfy $\sum_{i=1}^{M^3} \tr \gamma_{i,k} = M^3 c_k$. We now use that for $k \le 10 \rho \ell^3$, by Proposition~\ref{prop:est_H_box}, we have
\[
	E(\ell, k) \ge \frac{b_{\mathcal M}(V)}{6 \ell^6} k(k-1)(k-2) + o\!\left(\rho^3 b_{\mathcal M}(V) \ell^3 \right)\,.
\]
By the convexity of $t \mapsto t(t-1)(t-2)$ and denoting $x = \sum_{k \le 10\rho\ell^3} k c_k \le \rho \ell^3$,
\begin{equation} \label{eq:exp_E_1}
	\sum_{k \le 10 \rho \ell^3} c_k E(\ell, k) \ge \frac{b_{\mathcal M}(V)}{6 \ell^6} x (x-1)(x-2) + \mathcal O \!\left(\rho^3 b_{\mathcal M}(V) \ell^3 Y^{\nu} \right)
\end{equation}
holds for some $\nu >0$ as in Proposition~\ref{prop:est_H_box}.
On the other hand, using that $E(\ell, k+k') \ge E(\ell, k) + E(\ell, k')$, we obtain
\begin{equation} \label{eq:exp_E_2}
	\begin{aligned}[b]
		\sum_{k > 10 \rho \ell^3} c_k E(\ell, k) 
		&\ge \frac{1}{2} \frac{k}{10 \rho \ell^3} E\!\left(\ell, 10 \rho \ell^3 \right) \\
		&\ge \frac{\rho \ell^3 - x}{12} \left\{ \left(10\rho\ell^3 -1\right) \left(10\rho\ell^3 - 2 \right) + \mathcal O\!\left(\rho^2 b_{\mathcal M}(V) Y^{\nu} \right)\right\}.
	\end{aligned}
\end{equation}
Summing up~\eqref{eq:exp_E_1} and~\eqref{eq:exp_E_2}, we notice that the minimum is attained for $x = \rho\ell^3$. Recall that we choose $\ell \sim a Y^{-\alpha}$ so that $\rho \ell^3 \sim Y^{1-3\alpha} \gg 1$ for $\alpha > 1/3$. We obtain
\begin{align*}
\frac{\langle \Psi, H_N \Psi \rangle}{L^3} 
	&\ge \frac{M^3}{L^3} \left(\frac{b_{\mathcal M}(V)}{6 \ell^6} \rho \ell^3 \left(\rho \ell^3-1\right) \left(\rho \ell^3-2\right) + \mathcal O\!\left(\rho^3 b_{\mathcal M}(V) \ell^3 Y^{\nu} \right)\right) \\
	&\ge \frac{1}{6} \rho^3 b_{\mathcal M}(V) \left(1 - 3 Y^{3\alpha-1} + \mathcal O \!\left( Y^{\nu} \right) \right),
\end{align*}
where we used that $L = \ell M$.
\qed

%%%%%%%%%%%%%%%%%%%%%%%%%%%%%%%%%%%%%%%%%%%%%%%%%%%%%%%%%
\section{Upper bound}\label{sec:UB}
%%%%%%%%%%%%%%%%%%%%%%%%%%%%%%%%%%%%%%%%%%%%%%%%%%%%%%%%%

Let us define
\[
	\widetilde H_{n}^{\rm Dir} := \sum_{i=1}^n -\Delta^{\rm Dir}_{x_i} + \sum_{1\le i < j < k \le n} n W \left(n^{1/2}(x_i-x_j,x_i-x_k)\right)
\]
acting on $L_s^2([-1/2,1/2]^{3n})$, where $-\Delta^{\rm Dir}$ is the Laplacian with Dirichlet boundary condition. A simple adaptation of~\cite{NamRicTri-21} gives the following result.
\begin{lemma}
	\label{lem:upper_bound}
	We have,
	\begin{multline*}
		\inf_{\|\Psi\|_{L^2}=1} \frac{\langle \Psi, \widetilde H_{n}^{\rm Dir} \Psi\rangle}{n} \le \frac{1}{6} b_{\cM}(W) \left( 1+ C b_{\cM}(W)^{-1/2} \right) \\
		+ C n^{-1/3} (b_{\cM}(W) +1)^{12} \left(1 +\|W\|_{L^1} + \|W\|_{L^\infty}\right)^2,
	\end{multline*}
	where the constant $C$ does not depend on $W$.
\end{lemma}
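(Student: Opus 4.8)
The plan is to establish this upper bound variationally, by evaluating the energy of an explicit correlated trial state, exactly along the lines of the trapped Gross--Pitaevskii analysis in~\cite{NamRicTri-21}; the only genuinely new point is that the confinement is now realized through the Dirichlet boundary condition on the box. Let $f_{\cM}=1-\omega$ denote the minimizer of~\eqref{eq:def_b} associated to $W$, which solves $-2\Delta_{\cM}f_{\cM}+Wf_{\cM}=0$ with $f_{\cM}\to 1$ at infinity and satisfies $\int_{\R^6} W f_{\cM}=b_{\cM}(W)$. Since in dimension $d=6$ one has the decay $\omega(\bx)=O(|\bx|^{-4})$, I would first replace $f_{\cM}$ by a truncation $f$ that is frozen to the value $1$ outside a ball $\{|\bx|\le b\}$, at the cost of an error governed by the size of $b$. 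The properly rescaled correlation factor for the potential $nW(n^{1/2}\cdot)$ is then $f_{ijk}:=f\bigl(n^{1/2}(x_i-x_j,\,x_i-x_k)\bigr)$, whose localized profile reflects the scattering solution of the scaled interaction.

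The trial state I would use is
\[
	\Psi_n(x_1,\dots,x_n)=\prod_{i=1}^n\phi(x_i)\prod_{1\le i<j<k\le n}f_{ijk},
\]
where $\phi\ge 0$ is a one-body function on $[-1/2,1/2]^3$ equal to $1$ in the bulk and interpolating smoothly to $0$ across a boundary layer of width $\delta$, so that $\Psi_n$ obeys the Dirichlet condition. The width is chosen as $\delta\sim b_{\cM}(W)^{-1/2}$: this makes the condensate kinetic energy per particle of order $\int|\nabla\phi|^2\sim\delta^{-1}\sim b_{\cM}(W)^{1/2}$, which is precisely the relative correction $b_{\cM}(W)^{-1/2}$ to the leading term $\tfrac16 b_{\cM}(W)$ claimed in the statement, while the mass and interaction energy lost in the boundary layer are of the same or smaller order.

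The next step is the energy expansion. After bounding the normalization $\int|\Psi_n|^2$ from below---the Jastrow factors reduce it by a controllable amount---I would split $\langle\Psi_n,\widetilde H_n^{\rm Dir}\Psi_n\rangle$ into the condensate kinetic energy, the correlation energy, and cross terms. The core of the computation is that the kinetic and potential contributions attached to a single correlation factor recombine, via the scattering equation, into the scaled scattering energy of one triple, namely $b_{\cM}\bigl(nW(n^{1/2}\cdot)\bigr)=n^{-2}b_{\cM}(W)$; summing over the $\binom{n}{3}\sim n^3/6$ triples yields the leading contribution $\tfrac{n}{6}\,b_{\cM}(W)$, i.e. $\tfrac16 b_{\cM}(W)$ per particle. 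Here the Jacobian of the change to centre-of-mass and relative coordinates is $1$, so that the reduced problem is exactly the one defining $b_{\cM}$.

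It remains to show that every other term is of lower order and carries the stated dependence on $W$. The dangerous contributions are the overlaps between two correlation factors sharing a particle (the ``many-body collision'' terms), the truncation error from replacing $f_{\cM}$ by $f$, and the cross terms coupling $\nabla\phi$ to the Jastrow factors. These I would estimate by a combinatorial count of overlapping triples together with the diluteness of the rescaled gas, optimizing the cutoff radius $b$ to reach the rate $n^{-1/3}$. The main obstacle is exactly this bookkeeping: one must bound the collision and truncation terms \emph{uniformly in $W$} while keeping explicit track of $\|W\|_{L^1}$, $\|W\|_{L^\infty}$ and $b_{\cM}(W)$, and it is the repeated use of crude products in these estimates that produces the (non-optimal) powers $(b_{\cM}(W)+1)^{12}$ and $(1+\|W\|_{L^1}+\|W\|_{L^\infty})^2$. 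Since all of these estimates are already carried out in~\cite{NamRicTri-21} for the trapped system, the adaptation reduces to verifying that the boundary layer of $\phi$, with the choice $\delta\sim b_{\cM}(W)^{-1/2}$ above, does not spoil them.
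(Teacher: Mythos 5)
Your proposal is correct and takes essentially the same route as the paper: the paper's proof simply quotes~\cite[Corollary 19]{NamRicTri-21} (whose content is exactly the Jastrow-type trial-state analysis you sketch, with the scattering-equation recombination and collision-term bookkeeping) and then inserts a plateau condensate function vanishing in a Dirichlet boundary layer of width $\varepsilon = b_{\cM}(W)^{-1/2}$, which is precisely your choice of $\delta$. The only difference is presentational: you unpack the machinery that the paper imports as a black box, while the paper tracks how the $H^2$-norm of the cutoff ($\sim b_{\cM}(W)^{3/2}$) feeds into the cited error term to produce the explicit power $(b_{\cM}(W)+1)^{12}$.
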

\begin{proof}
	Following the proof of~\cite[Corollary 19]{NamRicTri-21}, we have for all $\varphi \in H^2_0([-1/2,1/2]^3)$ the bound
	\begin{multline*}
		\inf_{\|\Psi\|_{L^2}=1} \frac{\langle \Psi, \widetilde H_{n}^{\rm Dir} \Psi\rangle}{n} \\
		\le \left(\int_{\Omega} |\nabla \varphi|^2 +\frac{1}{6} b_{\cM}(W)\int_{\Omega} |\varphi|^6 \right) + C(\|\varphi\|_{L^2}, \|\varphi\|_{L^\infty})n^{-1/3}(b_{\cM}(W) +1)^3 \times \\
		\times \|\varphi\|_{H^2}^6 \left(1 +\|W\|_{L^1} + \|W\|_{L^\infty}\right)^2.
	\end{multline*}
	Now, for $\varepsilon >0$ small enough, let us choose $\varphi \in H_0^2([-1/2,1/2]^3)$ such that $0\le \varphi \le 1$, $\|\varphi\|_{L^2}=1$, $\varphi \equiv 1$ on $(1-\varepsilon)[-1/2,1/2]^3$, $|\nabla \varphi| \le C \varepsilon^{-1}$ and $|\Delta \varphi| \le C \varepsilon^{-2}$. Then we have $\|\nabla \varphi\|_{L^2}^2 \le C \varepsilon^{-1}$ and $\|\varphi\|_{H^2} \le C (\varepsilon^{-3}+1)$. Choosing $\varepsilon = b_{\cM}(W)^{-1/2}$, we obtain the claim.
\end{proof}

Let us introduce again the effective scattering length $a = b_{\cM}(V)^{1/4}$, let $\rho>0$ and let us denote $Y = \rho a^3$ the diluteness parameter. We will create a trial state made of collections of Dirichlet minimizers of boxes of size $\ell \simeq K \ell_{GP}$ slightly bigger than the Gross--Pitaevskii length scale, that is $K \gg 1$. Hence the localization error will be subleading. We parametrize $K = Y^{-\alpha}$ for some $\alpha >0$ that we will take sufficiently small, and we define $n = \lfloor K^3 Y^{-2} \rfloor \sim Y^{3\alpha -2}$ the number of particles that we put in each box. We consider boxes of size $\ell := a n^{1/2} K^{-1/2} \sim a K Y^{-1}$. With this choice of parameters, we have
\[
	n\ell^{-3} = \rho \left(1 + \mathcal O\!\left(Y^{2-3\alpha} \right) \right),
\]
as $Y = \rho a^3 \to 0$. Moreover denoting again $\mathcal U \Psi = \ell^{3n/2}\Psi(\ell \cdot)$, in the sense of quadratic forms on $\bigvee^N_s H^1_0([0,M(\ell + R_0)]^{3})$ one has $ H_{n,\ell} = \ell^{-2} \mathcal U^* \widetilde{H}^{\rm Dir}_{n,\ell} \mathcal U$, with 
\[
	W(x,y) = a^2 K^{-1} V\!\left( a K^{-1/2}(x,y) \right)
\]
and where we recall that $H_{n,\ell}$ is defined in~\eqref{eq:H_NL}. Note that $b_{\cM}(W) = K^2$.

For $M\ge 1$ and $z \in \llbracket 0,M-1\rrbracket^3$, we denote $B_z = (\ell+R_0) z + [0,\ell]^3$ and $\Psi_z$ the normalized ground state of $\mathcal U^* \widetilde{H}^{\rm Dir}_{n,\ell} \mathcal U$ in $B_z$. Thanks to the Dirichlet boundary condition, we can extend $\Psi_z$ by $0$ outside $B_z$. Then, taking $N = n M^3$, the bosonic state $\Psi := \bigvee_{z \in \llbracket 0,M-1\rrbracket^3} \Psi_z \in L^2_s([0,M(\ell + R_0)]^{3N})$ is normalized and since the boxes are at distance $R_0$ apart from each other, its reduced density matrices satisfy
\[
	\Gamma^{(1)}_{\Psi} = \sum_{z} \Gamma^{(1)}_{\Psi_z} \quad \textrm{ and } \quad \tr V(x-y,x-z) \Gamma^{(3)}_{\Psi} = \sum_{z} \tr V(x-y,x-z) \Gamma^{(3)}_{\Psi_z}\,.
\]
Here we have used the notations
\begin{align*}
	&\begin{multlined}[t][\textwidth]
		\Gamma^{(k)}_{\Psi}(x_1,\dots,x_k; y_1,\dots,y_k) \\
		= \binom{N}{k} \int_{[0,M(\ell + R_0)]^{3(N-k)}} \!\!\!\! \Psi(x_1,\dots,x_k,z_{k+1},\dots,z_{N}) \overline{\Psi}(y_1,\dots,y_k,z_{k+1},\dots,z_{N}) \d \bz
	\end{multlined}
	\intertext{and}
	&\begin{multlined}[t][\textwidth]
		\Gamma^{(k)}_{\Psi_z}(x_1,\dots, x_k; y_1,\dots,y_k) \\
		= \binom{n}{k} \int_{B_z^{(n-k)}} \Psi_z(x_1,\dots,x_k,z_{k+1},\dots,z_{n}) \overline{\Psi}_z(y_1,\dots,y_k,z_{k+1},\dots,z_{n}) \d \bz \,.
	\end{multlined}
\end{align*}
Therefore, combining the above with Lemma~\ref{lem:upper_bound}, we obtain
{\allowdisplaybreaks
\begin{align*}
	&\frac{\langle \Psi, H_{N,L} \Psi \rangle}{M^3 (\ell + R_0)^3} = \frac{1}{M^3 (\ell + R_0)^3}\sum_z \tr (-\Delta) \Gamma^{(1)}_{\Psi_z} + \tr V(x-y,x-z) \Gamma^{(3)}_{\Psi_z} \\
	&\begin{multlined}[t][0.975\textwidth]
			\le \frac{1}{M^3 (\ell + R_0)^3}\sum_z \frac{n}{\ell^2} \left( \frac{1}{6} b_{\cM}(W) + C b_{\cM}(W)^{1/2}\vphantom{C n^{-1/3} (b_{\cM}(W) +1)^{12} \left(1 +\|W\|_{L^1} + \|W\|_{L^\infty}\right)^2} \right. \\
			\left. +\ C n^{-1/3} (b_{\cM}(W) +1)^{12} \left(1 +\|W\|_{L^1} + \|W\|_{L^\infty}\right)^2\vphantom{\frac{1}{6} b_{\cM}(W) + C b_{\cM}(W)^{1/2}} \right)
		\end{multlined}\\
	&\begin{multlined}[t][0.975\textwidth]
			\le \frac{1}{(1+ R_0 a^{-1} Y^{1-2\alpha})^3} \frac{n}{\ell^3} \frac{1}{\ell^2} \left( \frac{1}{6} K^2 + CK\vphantom{C \frac{K^{24}}{n^{1/3}} \left(1+K^2 a^{-4}\|V\|_{L^1} + K^{-1}a^2 \|V\|_{L^\infty} \right)^2} \right. \\
			\left. +\ C \frac{K^{24}}{n^{1/3}} \left(1+K^2 a^{-4}\|V\|_{L^1} + K^{-1}a^2 \|V\|_{L^\infty} \right)^2 \vphantom{\frac{1}{6} K^{2}}\right)
		\end{multlined}\\
	&\le \frac{1}{6}\rho^3 b_{\cM}(V) \left(1 + C_{V} \left(Y^{\alpha} + Y^{2/3 - 27 \alpha} + Y^{2-3\alpha} + Y^{1-2\alpha}\right) \right),
\end{align*}
}%
where $C_{V}$ depends only on $R_0 a^{-1}, \|V\|_{L^1} a^{-4}$, and $ \|V\|_{L^\infty} a^2$ (recall $a:=b_{\cM}(V)^{1/4}$). Moreover we used that, with our choice of $\ell$ and $n$, we have
\[
	\frac{K^2}{\ell^2} = \frac{\rho^2 a^4}{1 + \lfloor K^3 Y^{-2}\rfloor / K^3 Y^{-2} } = \rho^2 a^4 \left(1 + \mathcal O\!\left(Y^{2-3\alpha} \right) \right).
\]
Taking $0 < \alpha < 2/81$ finishes the proof of the upper bound. \qed

%\bigskip
%\noindent 
%{\bf Data Availability.} Data sharing is not applicable to this article as no new data were created or analyzed in this
%study.

\bigskip
\raggedbottom

\end{document}